  \newcommand{\qmod}[1]{{\textup{[}{#1}\textup{]}}}
  \newcommand{\vs}{{\textvisiblespace}}
\begin{document}

\title{CELLULAR AUTOMATA\\GET THEIR WIRES CROSSED}

\author{Ed Blakey}
\address{Department of Mathematics, University of Bristol, University Walk, Bristol, BS8~1TW, UK\\
  \email{ed.blakey@queens.oxon.org}}

\maketitle

\begin{abstract}
In three spatial dimensions, communication channels are free to pass over or under each other so as to cross without intersecting; in two dimensions, assuming channels of strictly positive thickness, this is not the case.
It is natural, then, to ask whether one can, in a suitable, two-dimensional model, cross two channels in such a way that each successfully conveys its data, in particular without the channels interfering at the intersection.
We formalize this question by modelling channels as cellular automata, and answer it affirmatively by exhibiting systems whereby channels are crossed without compromising capacity.
We consider the efficiency (in various senses) of these systems, and mention potential applications.

\footnotesize{Keywords: \emph{cellular automaton, chip design, communication channel, junction, road network, wire crossing.}}
\end{abstract}

\section{Introduction}\label{sec:int}

\subsection{Motivation}

Suppose that one wishes to implement two channels of communication (e.g.,\ electrical wires that carry encoded bit-streams), one between points $N$ and $S$, the other between $E$ and $W$.
As the reader may have guessed, $N$ is positioned to the north, $E$ to the east, and so on; consequently, given the desire that neither channel skirt around the other (rather, we should like shorter, more direct lines of communication), the channels must \emph{cross}.

In three spatial dimensions, this is unproblematic: one has available not only `north/south' and `east/west' but also `up/down', which confers headroom enough for one channel to pass over the other, carried by a bridge for example.
In two dimensions, however, the channels not merely cross but moreover \emph{intersect}; and it is not at all clear that the channels' respective signals can survive this intersection (if the channels are electrical wires, for example, then their intersection implies electrical \emph{connection}, whence it is no longer possible to determine on which channel a current originated).

Thus, one naturally asks: \emph{is it possible, in a suitable, two-dimensional model, to cross two channels in such a way that each successfully conveys its data, in particular without problematic interference at the intersection?}

It is upon this question, which we call the \emph{Cross Question} (\emph{CQ}), that we focus in the present paper.
Our answer is affirmative, and, moreover, our proof constructive: we exhibit cellular automata (for it is these that we adopt as our formalism) that successfully cross channels without impairing capacity.
We consider also the efficiency (according to several measures) of our automata.

As a historical note, we point out that this work was originally inspired by the low-level description of neural activity given in \cite{geb}, in which a figure (reproduced here as Fig.~\ref{fig:geb}) appears with the caption,
``\qmod{i}n this schematic diagram, neurons are imagined as laid out as dots in one plane. Two overlapping pathways are shown in different shades of gray. It may happen that two independent ``neural flashes'' simultaneously race down these two pathways, passing through one another like two ripples on a pond's surface\ldots''

\begin{figure}[htbp]
  \centering
  \includegraphics[height=45mm]{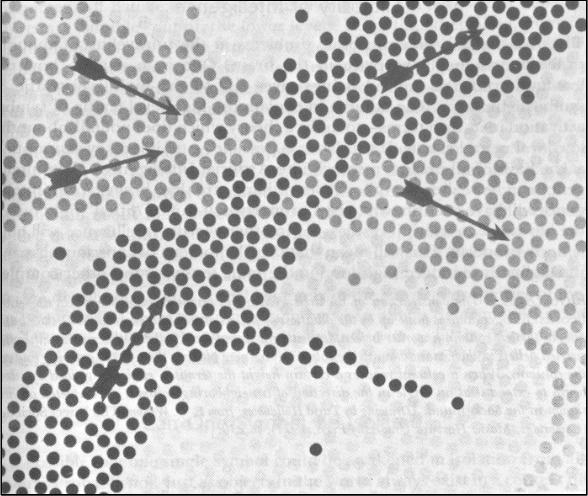}
  \caption{Two intersecting neural pathways, along which signals may travel, passing through each other at the common junction (figure reproduced from \cite{geb}).}\label{fig:geb}
\end{figure}

No further consideration is given in \cite{geb} to the details of how these signals pass through each other.
Neither should further explanation be expected: the restriction to two dimensions is made purely to allow the depiction on paper of a (three-dimensional) phenomenon in which neurons are in fact free to bend around each other and so to cross without intersecting.
Nonetheless, Fig.~\ref{fig:geb} inspires the present work by prompting the (intrinsically two-dimensional) CQ, which we answer below.
One may wonder, given that real-life neurons have available to them a third dimension, whether there is any need to consider the CQ; we claim that there is: aside from the question's academic interest, we note in Sect.~\ref{sec:conapp}\ practical contexts in which restriction to two dimensions is natural and beneficial.

\subsection{Approach}\label{sec:intapp}

So as to formalize the ideas above, and in particular so as to be able to state more rigorously the CQ, we model communication channels as \emph{cellular automata}; more precisely, we take as our model cellular automata augmented with the ability to accept \emph{input} (specifically, the messages to be carried by the channels so modelled)---cf.\ \cite{kutw}.
We give now the relevant definitions.

\begin{definition}[cf.\ \cite{canotes}]
A \emph{cellular automaton} is a tuple $\left(L, S, N, f\right)$ satisfying the following.
\begin{btlists}
  \item $L$ is a regular lattice of \emph{cells}.
  \item $S$ is a finite set of \emph{states}.
  \item $N = \left(n_1, \ldots, n_\nu\right)$ is a tuple\footnote{The ordering $n_1$, \ldots,\ $n_\nu$ is arbitrary, but is nonetheless fixed for use in Definition~\ref{def:caconf}.} of finitely many \emph{neighbourhood offsets} (which are distinct), where, for each offset $n_i$ and each cell $c \in L$, $n_i + c \in L$ is a \emph{neighbour} of $c$.
      $N$ contains amongst its coordinates the additive identity of $L$, whence each cell has itself as a neighbour.
  \item $f \colon S^\nu \to S$ is a \emph{transition function}.
\end{btlists}
\end{definition}

\begin{definition}[cf.\ \cite{canotes}]\label{def:caconf}
Suppose that $L$, $S$, $N$ and $f$ are as above.
A \emph{configuration} is a function $C_t \colon L \to S$ assigning to each cell a state; the subscript $t \in \mathbb{N}$ indexes (discretely modelled) time.
The initial configuration $C_0$ must be specified as part of the automaton's description; thereafter, each configuration $C_{t+1}$ is determined by the last, $C_t$, by way of the transition function $f$: for each cell $c \in L$ and each time $t \in \mathbb{N}$, $C_{t+1}\left(c\right) := f\left(C_t\left(c + n_1\right), \ldots, C_t\left(c + n_\nu\right)\right)$ (recall the fixed order $n_1$, \ldots,\ $n_\nu$).
\end{definition}

In modelling communication channels as cellular automata, it is convenient for us to modify the above definitions as follows.
\begin{btlisti}
  \item\label{pt:modsubset} We allow the cells to form a \emph{subset} of a lattice, rather than necessarily the whole lattice.
  \item\label{pt:modvary} We allow the transition function to \emph{vary} from cell to cell and from time-step to time-step.
  \item\label{pt:modinput} Crucially, we allow an automaton to accept \emph{input} (the messages to be carried by the channels); to this end, we equip the automaton with \emph{sources}: cells not governed by transition functions, but rather to which are supplied messages.\footnote{Compare the role of \emph{sources} here with the means by which input is supplied to \emph{iterative arrays}---see \cite{kutw}.}
\end{btlisti}

Consequently, we have the following.

\begin{definition}
A \emph{cellular automaton with sources} (\emph{CAS}) is a tuple $\mathcal{C} = \left(L, A, S, N, \left\{f_c\right\}_{c \in L \setminus A}\right)$ satisfying the following.
\begin{btlists}
  \item $L$ is a subset of some lattice ($\bar{L}$, say); elements of $L$ are \emph{cells} (cf.\ modification~(\ref{pt:modsubset}) above).\footnote{\label{fn:lfin}It is often desirable to impose the condition that $L$ be finite, though this stipulation is not necessary in order that the claims of the present paper hold (of course, the claims still hold when such restriction is made).}
  \item $A \subseteq L$ is the set of \emph{sources} of $\mathcal{C}$ (cf.\ modification~(\ref{pt:modinput}) above).
  \item $S$ is a finite set of \emph{states}, and contains a distinguished element \emph{blank}, denoted `\vs'.
      Messages to be carried by our channels are composed of states in $S$; let us suppose for convenience when transcribing messages, then, that $\left\{0, \ldots, 9\right\} \cup \left\{\mathrm{A}, \ldots, \mathrm{Z} \right\} \subseteq S$.
  \item $N = \left(n_1, \ldots, n_\nu\right)$ is a tuple of finitely many \emph{neighbourhood offsets} (which are distinct), where, for each offset $n_i$ and each cell $c \in L$, $n_i + c \in \bar{L}$ is a \emph{lattice-neighbour} of $c$; if, furthermore, $n_i + c \in L$, then $n_i + c$ is a \emph{neighbour} of $c$.
  \item For each non-source cell $c \in L \setminus A$, $f_c \colon S^\nu \times \mathbb{N} \to S$ is the \emph{transition function} for $c$.
      `$\mathbb{N}$' represents time, thus allowing the transition function to vary not only from cell to cell (hence the subscript `$c$') but also from time-step to time-step (cf.\ modification~(\ref{pt:modvary}) above).\footnote{\label{fn:parity}Note that, for present purposes, we require only (finitely representable) transition functions $f_c \colon S^\nu \times \left\{0, 1\right\} \to S$ that depend upon the \emph{parity} (an element of $\left\{0, 1\right\}$), rather than the value (an element of $\mathbb{N}$), of time; thus we avoid prohibitively (or even infinitely) complicated and memory-hungry descriptions of transition functions. Further, recall from Footnote~\ref{fn:lfin} that one may suppose $L$ to be finite, whence he need consider only finitely many distinct transition functions.}
\end{btlists}
\end{definition}

\begin{definition}
Let $\mathcal{C} = \left(L, A, S, N, \left\{f_c\right\}_{c \in L \setminus A}\right)$ be a CAS.
\begin{btlists}
  \item A \emph{message} (in $S$) is a function $m \colon \left\{0, 1, 2, \ldots, r\right\} \to S \setminus \left\{\text{\vs}\right\}$; $r \in \mathbb{N}$ is the message's \emph{length} (denoted $\lambda\left(m\right)$).
      We often describe messages in sequence form $m\left(0\right)$, \ldots,\ $m\left(\lambda\left(m\right)\right)$.
      Let $\mathcal{M}\left(S\right)$ be the set of messages (of any length) in $S$.
  \item An \emph{input} for $\mathcal{C}$ is a map $\iota \colon A \to \mathcal{M}\left(S\right)$ assigning to each source a message.
  \item	A \emph{configuration} is a function $C_{t,\iota} \colon L \to S$ assigning to each cell a state; the subscript $t \in \mathbb{N}$ discretely indexes time, whilst $\iota$ is an input.
    \begin{btlists}
      \item The initial configuration $C_{0,\iota}$ is defined such that, for all sources $c \in A$, $C_{0,\iota}\left(c\right) = \left(\iota\left(c\right)\right)\left(0\right)$, and, for all non-sources $c \in L \setminus A$, $C_{0,\iota}$ maps $c$ to an arbitrary element of $S$ (so, as with standard cellular automata, the initial configuration---in particular of the non-source cells---is a `free variable' that forms part of the system's description).
          Below, we frequently encounter the initial configuration mapping each non-source cell to the blank state \vs; we call this \emph{\vs-initialization}.
      \item Subsequent configurations $C_{t+1,\iota}$ are given by
          \[
          C_{t+1,\iota}\left(c\right) := \begin{cases} \left(\iota\left(c\right)\right)\left(t + 1\right) & \text{if } c \in A \wedge t < \lambda\circ\iota\left(c\right) \\ \text{\vs} & \text{if } c \in A \wedge t \geq \lambda\circ\iota\left(c\right) \\ f_c\left(C_{t,\iota}\left(c + n_1\right), \ldots, C_{t,\iota}\left(c + n_\nu\right), t + 1\right) & \text{otherwise} \enspace . \end{cases}
          \]
    \end{btlists}
      Hence, each source $c$ assumes in order the states $\left(\iota\left(c\right)\right)\left(0\right)$, \ldots,\ $\left(\iota\left(c\right)\right)\left(\lambda\circ\iota\left(c\right)\right)$, \vs, \vs, \vs, \ldots\ (i.e.,\ the message supplied by $\iota$ to source $c$, followed by blank states), and each non-source cell $c$ is initialized according to the choice of initial configuration and subsequently governed by its transition function $f_c$.
\end{btlists}
\end{definition}

(The formulation of transition functions given here allows, for example, the `spontaneous appearance of information', whereby non-{\vs} states may appear in neighbourhoods entirely populated by {\vs} states. Whilst this is undesirable in many contexts, it is unproblematic for present purposes, and so our definitions do not preclude such appearance.)

So as to honour modification~(\ref{pt:modsubset}) above, we intend transition function $f_c$ to take as its arguments the states only of the \emph{neighbours}---not generally all \emph{lattice}-neighbours---of $c$ (followed by a time index).
We stipulate, then, that $f_c$ not depend on its $i$th argument ($i \in \left\{1, \ldots, \nu\right\}$) whenever $c + n_i$ is not a neighbour (but merely a lattice-neighbour) of $c$; i.e.,\ for such $i$, we stipulate that, for each $\left(s_1, \ldots, s_{i-1}, s_{i+1}, \ldots, s_\nu\right) \in S^{\nu - 1}$ and each $t \in \mathbb{N}$, $\left|\left\{\, f_c\left(s_1, \ldots, s_\nu, t\right) \;\middle\vert\; s_i \in S \,\right\}\right| = 1$.

As an aside, we note that, of the three modifications introduced above to the standard definition of cellular automata, only the third (namely, provision for accepting input) admits systems that are not attainable without such modification; the other two (namely, (\ref{pt:modsubset}) partial lattices of cells, and (\ref{pt:modvary}) \mbox{time- and} cell-heterogeneity of transition functions) are mere notational conveniences.
This is because (\ref{pt:modsubset}) $S$ can be augmented by a special `non-cell' state---that invites ignorance by transition functions\mbox{---,} at which each member of $\bar{L} \setminus L$ is held; and (\ref{pt:modvary}) a family $\left\{f_c\right\}_{c \in L \setminus A}$ of time-heterogeneous transition functions can be simulated by a single time-homogeneous function $f$ provided that each cell `labels itself' (and maintains a clock)---by extending $S$ to $S \times L \times \mathbb{N}$\mbox{---,} whence $f$ ascertains which $f_c$ to simulate.\footnote{Maintenance of such clocks renders the state set infinite. However, we consider in the present paper only transition functions with finite dependencies upon time---recall Footnote~\ref{fn:parity}\mbox{---,} thus restoring state sets' finiteness.}

\begin{definition}\label{def:chan}
For $\delta \in \mathbb{Z}$,\footnote{Allowing $\delta$ to be negative is arguably unintuitive---why write `$\mathcal{C} \colon y \stackrel{-\delta}{\Rightarrow} x$' when `$\mathcal{C} \colon x \stackrel{\delta}{\Rightarrow} y$' would do?\mbox{---,}\ but has the desirable effect that the relation `between $x$ and $y$ there is a channel' is an \emph{equivalence relation}.} a CAS $\mathcal{C}$ (with cells $x$ and $y$) is said to be a \emph{channel from $x$ to $y$ with delay $\delta$}, written $\mathcal{C} \colon x \stackrel{\delta}{\Rightarrow} y$, if, for all $t \in \mathbb{N}$ and all inputs $\iota$,  $C_{t,\iota}\left(y\right) = \begin{cases} C_{t - \delta,\iota}\left(x\right) & \text{if } t \geq \delta \\ \text{\vs} & \text{otherwise} \enspace . \end{cases}$
\end{definition}

Thus, $\mathcal{C}$ is a channel from $x$ to $y$ with delay $\delta$ if and only if the states of $x$ are exactly reproduced at $y$ $\delta$ time-steps later (here, we adopt the conventions (1)~that all configurations $C_{t, \iota}$ for negative time indices $t$ are the constant blank function $C_{t, \iota} \colon c \mapsto \text{\vs}$, and (2)~that we \vs-initialize).

Note that we often take $x$ in Definition~\ref{def:chan} to be a \emph{source} of a CAS.

\begin{example}\label{eg:line}
For $q \in \mathbb{N}$, let $\mathcal{L}_q$ be the CAS $\left(\left\{0, \ldots, q\right\}, \left\{0\right\}, S, \left(-1, 0,  1\right), \left\{\pi_1\right\}\right)$, where $S$ is an arbitrary set of states and $\pi_1$ is the projection onto the first coordinate.
Then each non-source cell $c \in \left\{1, \ldots, q\right\}$ acquires as its state that of the cell $c -1$ at the previous time-step, and so any message supplied to the source $0$ propagates to a cell $c$ after $c$ time-steps.
Hence, for any $p \leq q$, $\mathcal{L}_q \colon 0 \stackrel{p}{\Rightarrow} p$.
\end{example}

Example~\ref{eg:line} captures the essence of the way in which communication channels may be expressed as cellular automata with sources: if the transition functions are such that messages' states are passed from cell to cell, then a channel may be established from a source to another cell.
Below, we consider in response to the CQ less trivial examples of CAS channels.

We focus hereafter on a specific subclass of CASs, which we now define.

\begin{definition}
A \emph{square-celled, four-neighbour CAS} (or \emph{4-CAS}) is a CAS with cells taken from $\mathbb{Z}^2$ and with neighbourhood offset tuple $N_4 := \left(\left(-1, 0\right), \left(0, -1\right), \left(1, 0\right), \left(0, 1\right), \left(0, 0\right)\right)$.
Hence, when $\mathbb{Z}^2$ is viewed as a square grid, the neighbours of a cell are the cell itself and those four cells orthogonally adjacent.
\end{definition}

\begin{definition}
\begin{btlists}
  \item Given a regular lattice of cells and a binary relation of neighbourhood between cells, a subset $C$ of the lattice, i.e.,\ a set $C$ of cells, is said to be \emph{connected} if, for every pair $\left(x, y\right)$ of cells in $C$, there exist $r \in \mathbb{N}$ and $\left(x_1, \ldots, x_r\right) \in C^r$ such that $x_1 = x$, $x_r = y$ and $x_{i+1}$ is a neighbour of $x_i$ for each $i \in \left\{1, \ldots, r - 1\right\}$.
  \item A CAS $\left(L, A, S, N, \left\{f_c\right\}_{c \in L \setminus A}\right)$ is said to be \emph{connected} if its set $L$ of cells is connected in the enveloping lattice $\bar{L}$ under the neighbourhood relation given by $N$.
\end{btlists}
\end{definition}

\begin{remark}\label{rem:bound}
Suppose that $C$ is a finite, connected set of cells (suppose also an enveloping lattice endowed with a neighbourhood relation, which, for the purposes of this remark, is viewed \emph{geometrically}; so as to ensure that the concepts used here are well defined, we assume lattice $\mathbb{Z}^2$ and neighbourhoods as for 4-CASs).
Then $C$ has an \emph{external boundary}, i.e.,\ a tuple $\left(e_0, \ldots, e_{k-1}\right)$ ($k \in \mathbb{N}$) of edges of square cells such that,
\begin{btlists}
  \item for each $i \in \left\{0, \ldots, k - 1\right\}$, $e_i$ and $e_{i + 1 \bmod k}$ meet at a vertex (say $v_i$), so that the tuple forms a \emph{closed curve} in the plane $\mathbb{R}^2$ of the geometrically-considered enveloping lattice $\mathbb{Z}^2$;
  \item the vertices $v_i$ are distinct, so that the closed curve is \emph{simple};
  \item each edge $e_i$ is the boundary between a member of $C$ (let $\bar{e}_i$ denote this unique cell), and a cell in the enveloping lattice but not in $C$; and
  \item $C$ lies entirely within the bounded region described by the curve.
\end{btlists}
It is intuitively clear that this boundary is unique up to reversal and cyclic permutation.\footnote{Existence of such a boundary can be formally demonstrated by exhibiting a set of operations on closed curves that, until a minimal-area boundary is attained, \emph{strictly reduce} the bounded (natural-number) area whilst maintaining that $C$ lies entirely within the curve, whence termination, and hence existence of a boundary, follows. Uniqueness follows from consideration of the \emph{area} enclosed by the curve.}
\end{remark}

\begin{definition}\label{def:cross}
A CAS $\mathcal{C} = \left(L, A, S, N, \left\{f_c\right\}_{c \in L \setminus A}\right)$ that is a channel both from $a_1$ to $b_1$ and from $a_2$ to $b_2$ for four distinct cells $a_1, a_2 \in A$ and $b_1, b_2 \in L$ is said to \emph{cross} these channels if
\begin{btlisti}
  \item $\mathcal{C}$ is connected;
  \item\label{pt:1touch} the external boundary $\left(e_0, \ldots, e_{k-1}\right)$ of $L$ is such that, if $\bar{e}_i = \bar{e}_j \in \left\{a_1, a_2, b_1, b_2\right\}$  (where the bar notation is as in Remark~\ref{rem:bound}, and where, without loss of generality, $i \leq j$), then either $\left|\left\{\, \bar{e}_l \;\middle\vert\; i \leq l \leq j \,\right\}\right| = 1$ or $\left|\left\{\, \bar{e}_l \;\middle\vert\; 0 \leq l \leq i \vee j \leq l \leq k - 1 \,\right\}\right| = 1$; and
  \item\label{pt:alltouch} there exist four edges $e_\alpha$, $e_\beta$, $e_\gamma$ and $e_\delta$ (with $\alpha < \beta < \gamma < \delta$) on the external boundary $\left(e_0, \ldots, e_{k-1}\right)$ of $L$ such that $\left(\bar{e}_\alpha, \bar{e}_\beta, \bar{e}_\gamma, \bar{e}_\delta\right)$ is a cyclic permutation of either $\left(a_1, a_2, b_1, b_2\right)$ or $\left(b_2, b_1, a_2, a_1\right)$.
\end{btlisti}
\end{definition}

Intuitively, point~(\ref{pt:1touch}) of Definition~\ref{def:cross} stipulates that each of $a_1$, $a_2$, $b_1$ and $b_2$ be touched at most once by the boundary of $L$ (possibly by several consecutive edges, for example when the cell is on a `corner' of $L$); point~(\ref{pt:alltouch}) stipulates that the boundary touch all of $a_1$, $a_2$, $b_1$ and $b_2$, and that it do so in such an order that $a_1$ is opposite $b_1$, $a_2$ opposite $b_2$.\footnote{Our stipulation that the boundary touch all of $a_i$ and $b_i$ is more restrictive than is necessary; if, however, there exists a method of crossing channels subject to this restriction (and we demonstrate below that such does exist), then a~fortiori there still exists such a method when the restriction is removed.}

That each $a_i$ is opposite the corresponding $b_i$ gives that, if a CAS crosses two channels as per Definition~\ref{def:cross}, then the channels do indeed cross in the intuitive sense.
Thus, in order affirmatively to answer the CQ, it is sufficient to exhibit a solution to the following problem (hereafter the \emph{4-CAS Problem}).
\begin{center}
\emph{Find a 4-CAS that crosses two channels (regardless of the choice of state set $S$)}.\end{center}

We exhibit just such a 4-CAS (in fact, two such) in Sect.~\ref{sec:sol}

\subsection{Related Work}\label{sec:intrel}

It has long been known that three XOR ($\oplus$) gates can, using only two dimensions, cross two paths carrying \emph{binary} signals; see Fig.~\ref{fig:xor}(a).
This scheme is correct since, given upper and lower input bits $a$ and $b$ respectively, the circuit produces upper and lower output bits $a \oplus \left(a \oplus b\right) = b$ and $\left(a \oplus b\right) \oplus b = a$ respectively. Geometrically speaking, as bit $a$ follows its path from (upper) input to (lower) output, it is twice combined via XOR with the same value (namely, $b$), thus leaving it unchanged; similarly, $b$ is ultimately unchanged by its twice being combined with $a$.
Furthermore, such logic circuits can be implemented in two-dimensional cellular automata such as Conway's \emph{Life} (see, for example, \cite{bcg,gar,kut}); thus, the scheme of Fig.~\ref{fig:xor}(a) solves the CQ in the case of a binary alphabet $\left\{0, 1\right\}$.

The scheme generalizes naturally to alphabets of size $n$ (w.l.o.g.\ taken to be $\left\{0, \ldots, n - 1\right\}$) via replacement of the XOR operation with addition/subtraction modulo $n$; see Fig.~\ref{fig:xor}(b).

\begin{figure}[htbp]
  \centering
  \includegraphics[height=25mm]{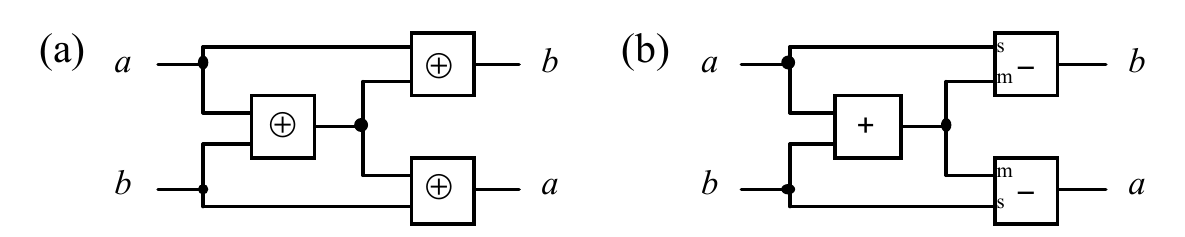}
  \caption{(a)~Scheme whereby binary values $a$ and $b$ are exchanged. This solves the CQ in the special case of messages taken from the alphabet $\left\{0, 1\right\}$. (b)~Generalization of the scheme to the alphabet $\left\{0, \ldots, n - 1\right\}$; addition and subtraction are performed modulo $n$, and `m' and `s' label each subtraction operation's minuend and subtrahend respectively.}\label{fig:xor}
\end{figure}

Since addition/subtraction modulo $n$ for fixed $n$ can clearly be implemented in cellular automata (e.g.,\ via transition functions that perform a look-up from $\mathcal{O}\left(n^2\right)$ entries), the generalized scheme of Fig.~\ref{fig:xor}(b) offers a solution to the CQ.
However, the novel solutions advocated in Sect.~\ref{sec:sol}\ of the present paper have three chief advantages over this scheme.
\begin{btlists}
  \item The solution of Fig.~\ref{fig:xor}(b) depends upon the size $n$ of the alphabet: addition and subtraction are performed \emph{modulo $n$}, whence one must have a~priori knowledge of $n$ in order to be able to implement these operations. The solutions of Sect.~\ref{sec:sol}\ are independent of the choice of alphabet; they can be implemented before $n$ is known, can accommodate the alphabet's changing during transmission of the messages to be crossed, and require of the alphabet no additive structure. 
  \item The solution of Fig.~\ref{fig:xor}(b) necessitates computational processing---e.g.,\ calculation or look-up from $\mathcal{O}\left(n^2\right)$ entries---at the addition/subtraction nodes. The solutions of Sect.~\ref{sec:sol}\ require of these nodes mere transfer (between cells) of states, which incurs no computational cost.
  \item The solution of Fig.~\ref{fig:xor}(b) preserves only the \emph{information content} of the messages being crossed, whereas the solutions of Sect.~\ref{sec:sol}\ physically move the messages' states from cell to cell. This latter, more general approach allows the crossing not only of streams of messages' symbols, but also of streams of \emph{physical objects} (whereas, of course, physical objects cannot be duplicated and combined modulo $n$ so as to be crossed by the scheme of Fig.~\ref{fig:xor}(b))---see the discussion of road junctions in Sect.~\ref{sec:conapp}
\end{btlists}

We recall also previous cellular-automatic solutions to the channel-crossing problem (see for example \cite{banks} and the references therein), but note that these do not allow the crossing of messages consisting of \emph{arbitrary} symbols, but rather reserve states that encode wires' boundaries, or utilize wires of width strictly greater than one, or similar; this is in contrast with the solutions presented in the following section.

\section{Solutions}\label{sec:sol}

In this section, we describe two solutions to the 4-CAS Problem.
The solutions both adhere to the same general scheme, which we now discuss.

\subsection{General Scheme}\label{sec:solgen}

Consider first the 4-CAS $\mathcal{X}$ given in Fig.~\ref{fig:x}.
(The lattice structure relating neighbours of $\mathcal{X}$---that $x = a_1 + \left(1, 0\right)$ and so on---is left implicit in the geometry of Fig.~\ref{fig:x}; in particular, we give the cells symbolic names, rather than labelling cells with elements of the enveloping lattice $\mathbb{Z}^2$. We treat subsequent 4-CASs similarly.)

\begin{figure}[htbp]
  \centering
  \includegraphics[height=24mm]{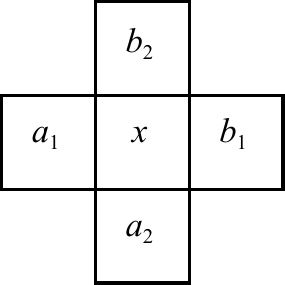}
  \caption{$\mathcal{X} = \left(\left\{a_1, a_2, x, b_1, b_2\right\}, \left\{a_1, a_2\right\}, S, N_4, \left\{f_x, f_{b_1}, f_{b_2}\right\}\right)$.}\label{fig:x}
\end{figure}

We claim that there is no choice of transition functions for the three non-source cells $x$, $b_1$ and $b_2$ such that $\mathcal{X}$ acts simultaneously as a channel from $a_i$ to $b_i$ for both $i \in \left\{1, 2\right\}$ (so certainly there is no choice such that $\mathcal{X}$ crosses these channels).
Intuitively, this is because, were these channels $\mathcal{X} \colon a_i \stackrel{\delta_i}{\Rightarrow} b_i$ established, then $x$ (by virtue of the geometry of $\mathcal{X}$) would have to encode in its state the previous states of both $a_1$ and $a_2$ (whence $b_1$ and $b_2$ could subsequently attain their respective states); however, assuming the non-trivial case where $S \supsetneq \left\{\text{\vs}\right\}$, there is no injective map from $S \times S$ to $S$, and so the (single) state of $x$---one of $\left|S\right|$ possibilities---cannot encode the \emph{pair} of previous states of $a_1$ and $a_2$---for which there are $\left|S\right|^2 > \left|S\right|$ possibilities.

Nonetheless, it is clear that either one of the channels \emph{in isolation} can be implemented via suitable choice of transition functions: if $x$ takes as its state that of $a_1$ at the previous time-step, and $b_1$ that of $x$ (more formally, if $f_x$ and $f_{b_1}$ are the projection $\pi_1$ onto the first coordinate, where, recall, the first coordinate $n_1$ of $N_4$ is $\left(-1, 0\right)$), then $\mathcal{X} \colon a_1 \stackrel{2}{\Rightarrow} b_1$; if \emph{instead} $f_x = f_{b_2} = \pi_2$, where the second coordinate $n_2$ of $N_4$ is $\left(0, -1\right)$, then $\mathcal{X} \colon a_2 \stackrel{2}{\Rightarrow} b_2$.

Further, there is a compromise between these two single channels: if $x$ takes its state \emph{alternately} from $a_1$ and $a_2$---e.g.,\ if we have the transition function $f_x \colon \left(s_1, \ldots, s_5, t\right) \mapsto \begin{cases} s_1 & \text{if $t$ is even} \\ s_2 & \text{if $t$ is odd} \end{cases}$ (and if $f_{b_i} = \pi_i$)\mbox{---,} then \emph{half} of each message (specifically every other state) is passed with delay $2$ from $a_i$ to $b_i$.
This behaviour is encapsulated in Table~\ref{tab:x}(a), which shows the state of each cell in each time-step $t \leq 6$ (arbitrarily assuming \vs-initialization, and supposing supply of messages\footnote{These are not strictly messages since it is not clear how or even if they terminate; this is not, however, problematic since their truncation at an arbitrary point yields genuine messages. Similar abuses of the definition of `message' occur throughout.} A, B, C, {\ldots}\ to $a_1$ and 0, 1, 2, {\ldots}\ to $a_2$).

\begin{table}[htbp]\footnotesize
  \begin{center}
\begin{tabular}{rcrc}
(a)&
  \begin{tabular}[t]{r|ccccc}
    $t$ & $a_1$ & $a_2$ & $x$ & $b_1$ & $b_2$ \\
    \hline
    0 & A & 0 & \vs & \vs & \vs \\
    1 & B & 1 & 0 & \vs & \vs \\
    2 & C & 2 & B & 0 & \textbf{0} \\
    3 & D & 3 & 2 & \textbf{B} & B \\
    4 & E & 4 & D & 2 & \textbf{2} \\
    5 & F & 5 & 4 & \textbf{D} & D \\
    6 & G & 6 & F & 4 & \textbf{4} \\
    \vdots & \vdots & \vdots & \vdots & \vdots & \vdots
  \end{tabular}
&(b)&
  \begin{tabular}[t]{r|ccccc}
    $t$ & $a_1$ & $a_2$ & $x$ & $b_1$ & $b_2$ \\
    \hline
    0 & A & 0 & \vs & \vs & \vs \\
    1 & B & 1 & 0 & \vs & \vs \\
    2 & C & 2 & B & \vs & \textbf{0} \\
    3 & D & 3 & 2 & \textbf{B} & \textbf{0} \\
    4 & E & 4 & D & \textbf{B} & \textbf{2} \\
    5 & F & 5 & 4 & \textbf{D} & \textbf{2} \\
    6 & G & 6 & F & \textbf{D} & \textbf{4} \\
    \vdots & \vdots & \vdots & \vdots & \vdots & \vdots
  \end{tabular}
\end{tabular}
  \caption{The behaviour of $\mathcal{X}$, (a)~as originally introduced and (b)~as modified so as to emphasize the roles of $b_i$.}\label{tab:x}
  \end{center}
\end{table}

So as to emphasize the intended roles of $b_i$---i.e.,\ that they are the respective destinations of messages supplied to $a_i$\mbox{---,} we modify $f_{b_i}$ such that $f_{b_i} \colon \left(s_1, \ldots, s_5, t\right) \mapsto \begin{cases} s_i & \text{if $t + i$ is even} \\ s_5 & \text{if $t + i$ is odd} \end{cases}$ (recall that the fifth coordinate $n_5$ of $N_4$ is $\left(0, 0\right)$).
This results in the behaviour shown in Table~\ref{tab:x}(b), where it can be seen that only `relevant' states (i.e.,\ those from the message supplied to $a_i$) arrive at each destination $b_i$.
Hereafter, $\mathcal{X}$ denotes the 4-CAS described here, including the transition function $f_x$ and modified transition functions $f_{b_i}$.

\begin{definition}
A message $m \colon \left\{0, 1, 2, \ldots, r\right\} \to S \setminus \left\{\text{\vs}\right\}$ is said to be a \emph{couplet message} if $r$ is odd and, for each even $i \in \left\{0, 2, 4, \ldots, r - 1\right\}$, $m\left(i\right) = m\left(i + 1\right)$.
Hence, couplet messages have the form \textup{A}, \textup{A}, \textup{B}, \textup{B}, \textup{C}, \textup{C}, {\ldots},\ \textup{Z}, \textup{Z}, where \textup{A}, {\ldots},\ \textup{Z} are (non-\vs) states.
\end{definition}

\begin{lemma}\label{lem:coup}
\begin{btlisti}
  \item If the message supplied to each source $a_i$ of $\mathcal{X}$ is a couplet message, then it is transmitted to its destination $b_i$ in its entirety (despite being only `half-transmitted' as outlined above).
  \item\label{pt:divcoup} Any message can be divided into two couplet messages.
  \item Any message can be recovered from the two couplet messages of point~(\ref{pt:divcoup}).
\end{btlisti}
\end{lemma}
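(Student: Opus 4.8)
The plan is to treat part~(i) as the substantive claim and to dispatch parts~(\ref{pt:divcoup}) and~(iii) by explicit construction. Throughout I suppress the input subscript, writing $C_t$ for $C_{t,\iota}$, and I assume \vs-initialization, so that $C_t(c)=\text{\vs}$ for all $t<0$. For part~(i) I would first unwind the transition functions of $\mathcal{X}$ into parity-indexed recurrences. Since $a_1$ is the first-coordinate (west) neighbour of $x$ and $a_2$ its second-coordinate (south) neighbour, the definition of $f_x$ gives $C_t(x)=C_{t-1}(a_1)$ when $t$ is even and $C_t(x)=C_{t-1}(a_2)$ when $t$ is odd. Reading off $f_{b_1}$ (with $i=1$) shows that $b_1$ copies $x$ exactly at odd time indices and otherwise retains its own state, while $f_{b_2}$ (with $i=2$) shows that $b_2$ copies $x$ exactly at even time indices and otherwise retains its state. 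Composing these two layers---a short induction on $t$ suffices---yields the sampling relations
\[
C_{2k+1}(b_1)=C_{2k-1}(a_1),\qquad C_{2k}(b_2)=C_{2k-2}(a_2),
\]
together with the ``hold'' identities $C_{2k}(b_1)=C_{2k-1}(b_1)$ and $C_{2k+1}(b_2)=C_{2k}(b_2)$.

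Next I would feed in the couplet hypothesis. Writing $p$ and $q$ for the couplet messages at $a_1$ and $a_2$ (so that $C_t(a_i)$ is the $t$th symbol of the relevant message), the defining couplet property $m(2j)=m(2j+1)$ is precisely what converts the half-sampling above into full transmission: the symbol that $b_1$ fails to sample at a given position equals, by the couplet identity, the one it does sample at the adjacent position, and dually for $b_2$. Carrying this through, I expect to obtain $C_t(b_1)=p(t-3)$ for all $t\geq 3$ and $C_t(b_2)=q(t-2)$ for all $t\geq 2$ (with blanks beforehand); that is, $\mathcal{X}$ acts as a channel $a_1\stackrel{3}{\Rightarrow}b_1$ and $a_2\stackrel{2}{\Rightarrow}b_2$ on couplet inputs, so each message indeed arrives in its entirety. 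I anticipate the parity bookkeeping to be the main obstacle: the phase offset introduced by the `$+i$' in $f_{b_i}$ makes $b_1$ and $b_2$ sample on \emph{opposite} parities, and hence with \emph{different} delays ($3$ versus $2$), so one must take care to invoke the couplet identity at the correct index---the sampled (odd) step for $b_1$, the held (odd) step for $b_2$---in each of the two cases.

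For part~(\ref{pt:divcoup}), given an arbitrary message $m=m(0),\ldots,m(r)$ I would exhibit the division into the two couplet messages $u:=m(0),m(0),m(2),m(2),\ldots$ and $v:=m(1),m(1),m(3),m(3),\ldots$, formed by separating the even- and odd-indexed symbols of $m$ and duplicating each; both are couplet messages by construction. Part~(iii) is then immediate: reading off the distinct values of $u$ and of $v$ recovers the even- and odd-indexed symbols of $m$ respectively, and interleaving them reconstitutes $m(0),m(1),m(2),\ldots=m$. Only routine attention to the parity of $r$---and hence to which of $u$, $v$ supplies the final symbol---is required, so these two parts carry none of the real work, which lies entirely in the dynamical analysis of part~(i).
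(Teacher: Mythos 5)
Your proposal is correct, but it diverges from the paper in a way worth flagging. For part~(i) you and the paper establish the same facts (that on couplet inputs $\mathcal{X} \colon a_1 \stackrel{3}{\Rightarrow} b_1$ and $\mathcal{X} \colon a_2 \stackrel{2}{\Rightarrow} b_2$); the paper does this by exhibiting a behaviour table and appealing to the visible pattern, whereas you derive the parity-indexed recurrences $C_{2k+1}\left(b_1\right) = C_{2k-1}\left(a_1\right)$, $C_{2k}\left(b_2\right) = C_{2k-2}\left(a_2\right)$ and the hold identities explicitly and then invoke the couplet identity at the right indices---your treatment is, if anything, the more rigorous of the two, and your delay bookkeeping (3 versus 2, opposite sampling parities) agrees exactly with the paper's table. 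The real divergence is in parts~(2) and~(3): you prove them as purely combinatorial statements (define $u$ and $v$ by duplicating the even- and odd-indexed symbols; interleave to recover $m$), which does suffice for the lemma as literally stated. The paper instead proves them \emph{constructively within the CAS formalism}, by exhibiting sub-automata $\mathcal{Y}$ (a three-cell splitter whose cells $c$ and $d$ emit the two couplets, each preceded by one or two blank states) and $\mathcal{L}$ (a three-cell merger that reconstructs the original message), again via behaviour tables. This is not pedantry on the paper's part: the lemma is the scaffolding for the general crossing scheme of Sect.~\ref{sec:solgen}\ and for the automaton $\mathcal{F}$, both of which require $\mathcal{Y}$ and $\mathcal{L}$ as concrete cellular-automatic components (see Remark~\ref{rem:schemelink}). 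So your argument buys a cleaner, table-free verification of part~(i), but leaves the splitting and recombination unimplemented as automata; to support the rest of the paper you would still need to supply $\mathcal{Y}$ and $\mathcal{L}$ (or equivalents) and check that their timing offsets compose correctly with those of $\mathcal{X}$.
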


\begin{proof}
\begin{btlisti}
  \item Table~\ref{tab:x3} shows the behaviour of $\mathcal{X}$ when \vs-initialized and supplied with arbitrary couplet messages A, A, B, B, {\ldots}\ and 0, 0, 1, 1, {\ldots}
Note that, given such input, $\mathcal{X} \colon a_1 \stackrel{3}{\Rightarrow} b_1$ and $\mathcal{X} \colon a_2 \stackrel{2}{\Rightarrow} b_2$; the respective messages are recreated in their entirety at $b_i$ since, although half of each message's states are discarded at $x$, the remaining half contain the whole information content of the message due to the redundancy conferred by the form thereof.
\begin{table}[htbp]\footnotesize
  \begin{center}
  \begin{tabular}{r|ccccc}
    $t$ & $a_1$ & $a_2$ & $x$ & $b_1$ & $b_2$ \\
    \hline
    0 & A & 0 & \vs & \vs & \vs \\
    1 & A & 0 & 0 & \vs & \vs \\
    2 & B & 1 & A & \vs & \textbf{0} \\
    3 & B & 1 & 1 & \textbf{A} & \textbf{0} \\
    4 & C & 2 & B & \textbf{A} & \textbf{1} \\
    5 & C & 2 & 2 & \textbf{B} & \textbf{1} \\
    6 & D & 3 & C & \textbf{B} & \textbf{2} \\
    \vdots & \vdots & \vdots & \vdots & \vdots & \vdots
  \end{tabular}
  \caption{The behaviour of $\mathcal{X}$ when supplied with couplet messages.}\label{tab:x3}
  \end{center}
\end{table}
  \item Consider the 4-CAS $\mathcal{Y}$ (given in Fig.~\ref{fig:y}), with transition functions $f_c \colon \left(s_1, \ldots, s_5, t\right) \mapsto \begin{cases} s_3 & \text{if $t$ is even} \\ s_5 & \text{if $t$ is odd} \end{cases}$ and $f_d \colon \left(s_1, \ldots, s_5, t\right) \mapsto \begin{cases} s_5 & \text{if $t$ is even} \\ s_1 & \text{if $t$ is odd} \end{cases}$ (recall that the third coordinate $n_3$ of $N_4$ is $\left(1, 0\right)$).
      Supposing \vs-initialization and supply to $y$ of the message A, B, C, {\dots}, the behaviour of $\mathcal{Y}$ is as in Table~\ref{tab:y}; in particular, the message is split into two couplet messages (preceded by one or two blank states), one containing message-states taken by $y$ at odd time-steps, the other containing states of $y$ at even time-steps.
\begin{figure}[htbp]
  \centering
  \includegraphics[height=8mm]{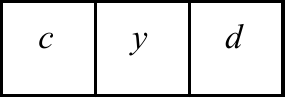}
  \caption{$\mathcal{Y} = \left(\left\{c, y, d\right\}, \left\{y\right\}, S, N_4, \left\{f_c, f_d\right\}\right)$.}\label{fig:y}
\end{figure}
\begin{table}[htbp]\footnotesize
  \begin{center}
  \begin{tabular}{r|ccc}
    $t$ & $y$ & $c$ & $d$ \\
    \hline
    0 & A & \vs & \vs \\
    1 & B & \vs & A \\
    2 & C & B & A \\
    3 & D & B & C \\
    4 & E & D & C \\
    5 & F & D & E \\
    \vdots & \vdots & \vdots & \vdots
  \end{tabular}
  \caption{The behaviour of $\mathcal{Y}$.}\label{tab:y}
  \end{center}
\end{table}
  \item Consider the 4-CAS $\mathcal{L}$ (Fig.~\ref{fig:l}), where $f_l \colon \left(s_1, \ldots, s_5, t\right) \mapsto \begin{cases} s_3 & \text{if $t$ is even} \\ s_1 & \text{if $t$ is odd} \enspace . \end{cases}$
      Supposing \vs-initialization, the behaviour of $\mathcal{L}$ given messages A, B, C, {\ldots}\ and 0, 1, 2, {\ldots}\ is shown in Table~\ref{tab:l}(a).
      If both messages are couplets (say A, A, B, B, {\ldots}\ and 0, 0, 1, 1, {\ldots}),\ then $\mathcal{L}$ exhibits the behaviour of Table~\ref{tab:l}(b).
      If, more specifically, the messages are the two couplets A, A, C, C, {\ldots}\ and B, B, D, D, {\ldots}\ produced as in point~(\ref{pt:divcoup}) by cells $d$ and $c$ of $\mathcal{Y}$, then $\mathcal{L}$ exhibits the behaviour given in Table~\ref{tab:l}(c); note in particular that the original message A, B, C, {\ldots}\ of point~(\ref{pt:divcoup}) is reconstructed at $l$ from the two couplet messages.
\end{btlisti}
\begin{figure}[htbp]
  \centering
  \includegraphics[height=8mm]{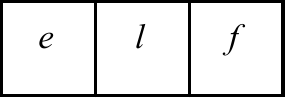}
  \caption{$\mathcal{L} = \left(\left\{e, l, f\right\}, \left\{e, f\right\}, S, N_4, \left\{f_l\right\}\right)$.}\label{fig:l}
\end{figure}
\begin{table}[htbp]\footnotesize
  \begin{center}
\begin{tabular}{rcrcrc}
(a)&
  \begin{tabular}[t]{r|ccc}
    $t$ & $e$ & $f$ & $l$ \\
    \hline
    0 & A & 0 & \vs \\
    1 & B & 1 & A \\
    2 & C & 2 & 1 \\
    3 & D & 3 & C \\
    4 & E & 4 & 3 \\
    \vdots & \vdots & \vdots & \vdots
  \end{tabular}
&(b)&
  \begin{tabular}[t]{r|ccc}
    $t$ & $e$ & $f$ & $l$ \\
    \hline
    0 & A & 0 & \vs \\
    1 & A & 0 & A \\
    2 & B & 1 & 0 \\
    3 & B & 1 & B \\
    4 & C & 2 & 1 \\
    \vdots & \vdots & \vdots & \vdots
  \end{tabular}
&(c)&
  \begin{tabular}[t]{r|ccc}
    $t$ & $e$ & $f$ & $l$ \\
    \hline
    0 & A & B & \vs \\
    1 & A & B & A \\
    2 & C & D & B \\
    3 & C & D & C \\
    4 & E & F & D \\
    \vdots & \vdots & \vdots & \vdots
  \end{tabular}
\end{tabular}
  \caption{The behaviour of $\mathcal{L}$ when supplied with (a)~arbitrary messages, (b)~couplet messages and (c)~the couplet messages constructed in point~(\ref{pt:divcoup}) of Lemma~\ref{lem:coup}.}\label{tab:l}
  \end{center}
\end{table}
\end{proof}

We see in Lemma~\ref{lem:coup} (1)~that the automaton $\mathcal{X}$ successfully crosses channels provided that the messages carried thereby are couplets, (2)~that $\mathcal{Y}$ divides an arbitrary message into two couplets, and (3)~that $\mathcal{L}$ recombines these two couplets, yielding the original message.
This suggests a general scheme whereby two channels (carrying arbitrary messages) may be crossed: the scheme uses (a)~two copies of $\mathcal{Y}$, one to divide each message into two couplets; (b)~four copies of $\mathcal{X}$, one to cross each couplet of the first message with each of the second; and (c)~two copies of $\mathcal{L}$, one to recover each message from its couplet pair.
The scheme is depicted in Fig.~\ref{fig:scheme},\footnote{Note that the arrows entering or leaving instances of the automata $\mathcal{X}$, $\mathcal{Y}$ and $\mathcal{L}$ resemble the letters `X', `Y' and `$\lambda$' respectively, hence the choice of names for these systems.} in which the messages to be crossed are $m_1$ and $m_2$; $e_i$ denotes the couplet message consisting of the even-index states of $m_i$ (that is, $e_i$ is $m_i\left(0\right)$, $m_i\left(0\right)$, $m_i\left(2\right)$, $m_i\left(2\right)$, {\ldots});\ $o_i$ is the corresponding odd-index couplet message $m_i\left(1\right)$, $m_i\left(1\right)$, $m_i\left(3\right)$, $m_i\left(3\right)$, {\ldots}

\begin{figure}[htbp]
  \centering
  \includegraphics[height=68mm]{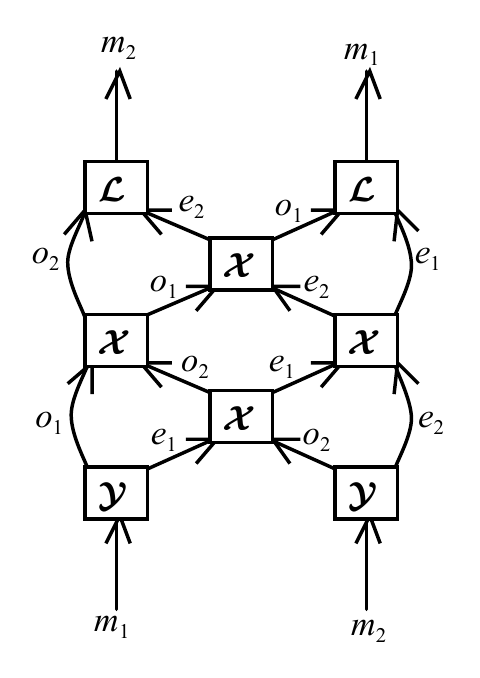}
  \caption{The scheme whereby two channels are crossed.}\label{fig:scheme}
\end{figure}

In Sect.~\ref{sec:solimp},\ we exhibit a 4-CAS that implements this scheme, thereby answering (affirmatively and constructively) the 4-CAS Problem.

\subsection{Implementing the Scheme}\label{sec:solimp}

Let $\mathcal{F}$ be the 4-CAS given in Fig.~\ref{fig:f}.
Its sources are $a_1$ and $a_2$; its neighbourhood index tuple is $N_4 = \left(\left(-1, 0\right), \left(0, -1\right), \left(1, 0\right), \left(0, 1\right), \left(0, 0\right)\right)$.

\begin{figure}[htbp]
  \centering
  \includegraphics[height=40mm]{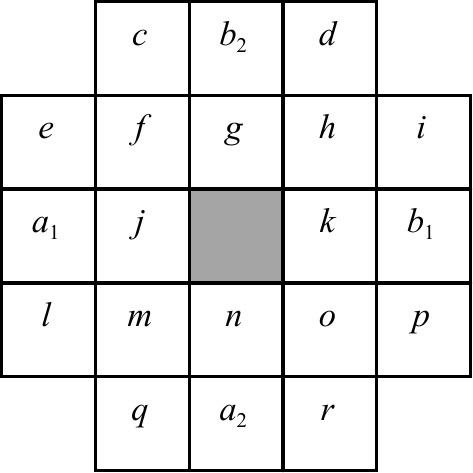}
  \caption{$\mathcal{F} = \left(\left\{a_1, a_2, b_1, b_2, c, d, e, \ldots, r\right\}, \left\{a_1, a_2\right\}, S, N_4, \left\{f_x\right\}_{x \in \left\{b_1, b_2, c, d, e, \ldots, r\right\}}\right)$.}\label{fig:f}
\end{figure}

Let `$x \leftarrow i;j$' denote `$f_x \colon \left(s_1, \ldots, s_5, t\right) \mapsto \begin{cases} s_i & \text{if $t$ is even} \enspace \text{'} \\ s_j & \text{if $t$ is odd}  \end{cases}$ and let `$X \leftarrow i;j$' denote `$\forall x \in X, \; x \leftarrow i;j$'.
Then the transition functions of $\mathcal{F}$ are as follows.
$b_1 \leftarrow 4;2$.
$b_2 \leftarrow 1;3$.
$\left\{c, d, j, k\right\} \leftarrow 5;2$.
$e \leftarrow 2;5$.
$\left\{f, h, m, o\right\} \leftarrow 2;1$.
$\left\{g, i, n, p\right\} \leftarrow 1;5$.
$l \leftarrow 5;4$.
$q \leftarrow 3;5$.
$r \leftarrow 5;1$.

Supposing that $a_1$ is supplied with the message A, B, C, {\ldots}\ and $a_2$ with 0, 1, 2, {\ldots},\ and supposing \vs-initialization, the behaviour of $\mathcal{F}$ is given in Table~\ref{tab:f}.

\begin{table}[htbp]\footnotesize
  \begin{center}
  \begin{tabular}{r||cccc|cccc|cccc|cccc|cccc}
    $t$ & $a_1$ & $a_2$ & $c$ & $d$ & $e$ & $f$ & $g$ & $h$ & $i$ & $j$ & $k$ & $l$ & $m$ & $n$ & $o$ & $p$ & $q$ & $r$ & $b_1$ & $b_2$ \\
    \hline
    \hline
    0 & A & 0 & \vs & \vs & \vs & \vs & \vs & \vs & \vs & \vs & \vs & \vs & \vs & \vs & \vs & \vs & \vs & \vs & \vs & \vs \\
    1 & B & 1 & \vs & \vs & \vs & \vs & \vs & \vs & \vs & \vs & \vs & A & \vs & \vs & \vs & \vs & \vs & 0 & \vs & \vs \\
    2 & C & 2 & \vs & \vs & B & \vs & \vs & \vs & \vs & \vs & \vs & A & \vs & \vs & 0 & \vs & 1 & 0 & \vs & \vs \\
    3 & D & 3 & \vs & \vs & B & B & \vs & \vs & \vs & \vs & 0 & C & A & \vs & \vs & \vs & 1 & 2 & \vs & \vs \\
    \hline
    4 & E & 4 & \vs & \vs & D & \vs & B & 0 & \vs & \vs & 0 & C & 1 & A & 2 & \vs & 3 & 2 & \vs & \vs \\
    5 & F & 5 & \vs & 0 & D & D & B & B & \vs & 1 & 2 & E & C & A & A & \vs & 3 & 4 & \vs & \vs \\
    6 & G & 6 & \vs & 0 & F & 1 & D & 2 & B & 1 & 2 & E & 3 & C & 4 & A & 5 & 4 & \vs & \vs \\
    7 & H & 7 & 1 & 2 & F & F & D & D & B & 3 & 4 & G & E & C & C & A & 5 & 6 & \textbf{A} & \textbf{0} \\
    \hline
    8 & I & 8 & 1 & 2 & H & 3 & F & 4 & D & 3 & 4 & G & 5 & E & 6 & C & 7 & 6 & \textbf{B} & \textbf{1} \\
    9 & J & 9 & 3 & 4 & H & H & F & F & D & 5 & 6 & I & G & E & E & C & 7 & 8 & \textbf{C} & \textbf{2} \\
    \vdots & \vdots & \vdots & \vdots & \vdots & \vdots & \vdots & \vdots & \vdots & \vdots & \vdots & \vdots & \vdots & \vdots & \vdots & \vdots & \vdots & \vdots & \vdots & \vdots & \vdots
  \end{tabular}
  \caption{The behaviour of $\mathcal{F}$. Notably, $\mathcal{F} \colon a_i \stackrel{7}{\Rightarrow} b_i$.}\label{tab:f}
  \end{center}
\end{table}

Note in particular that $\mathcal{F} \colon a_1 \stackrel{7}{\Rightarrow} b_1$ and $\mathcal{F} \colon a_2 \stackrel{7}{\Rightarrow} b_2$, and that, by virtue of the geometric layout of $\mathcal{F}$, \emph{these channels are crossed} (recall Definition~\ref{def:cross}).
Thus, we come to our main result.

\begin{theorem}
The 4-CAS Problem has a solution; this answers the CQ affirmatively.
\end{theorem}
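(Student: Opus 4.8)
The plan is to take the 4-CAS $\mathcal{F}$ of Fig.~\ref{fig:f} as the required witness and to verify the two things Definition~\ref{def:cross} demands of it: that $\mathcal{F}$ is a channel both from $a_1$ to $b_1$ and from $a_2$ to $b_2$ with a common delay, and that it crosses these channels in the geometric sense.

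For the channel property, the crucial preliminary observation is that \emph{every} transition function of $\mathcal{F}$ has the form $x \leftarrow i;j$, i.e.\ a projection selecting a single neighbour's state according to the parity of the time index. Consequently $\mathcal{F}$ performs pure transport of states: by an easy induction on $t$, the state $C_{t,\iota}(x)$ of any cell $x$ at any time $t$ is either {\vs} or equal to the state of a \emph{uniquely determined} source at a \emph{uniquely determined} earlier time, and---since no $f_x$ depends on the \emph{values} of its arguments, only on which neighbour to copy---this provenance map is independent of the input $\iota$. It therefore suffices to exhibit the input--output behaviour for a single input whose supplied symbols are pairwise distinct: each symbol then acts as a tracer revealing its own routing, and the behaviour on an arbitrary input is recovered by substitution. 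Table~\ref{tab:f} does precisely this for the input A, B, C, {\ldots}\ to $a_1$ and 0, 1, 2, {\ldots}\ to $a_2$, and reading off the columns for $b_1$ and $b_2$ shows $C_{t,\iota}(b_i) = C_{t-7,\iota}(a_i)$ for each $i$ and all $t$ (with {\vs} for $t < 7$); hence $\mathcal{F} \colon a_i \stackrel{7}{\Rightarrow} b_i$ for both $i \in \{1, 2\}$, as required by Definition~\ref{def:chan}. Equivalently, one may argue structurally: $\mathcal{F}$ realizes the scheme of Fig.~\ref{fig:scheme}, whose correctness follows from Lemma~\ref{lem:coup}---two copies of $\mathcal{Y}$ split each $m_i$ into its couplets, four copies of $\mathcal{X}$ cross these couplets pairwise, and two copies of $\mathcal{L}$ reassemble each $m_i$ at $b_i$.

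For the crossing property, it remains to check the three clauses of Definition~\ref{def:cross} against the geometry of Fig.~\ref{fig:f}. Connectedness of the cell set is immediate from the figure. For clauses~(\ref{pt:1touch}) and~(\ref{pt:alltouch}) one computes the external boundary of $L$ in the sense of Remark~\ref{rem:bound} and observes that each of $a_1, a_2, b_1, b_2$ lies on it, is touched by (consecutive edges amounting to) at most one visit of the boundary, and that these four cells occur along the boundary in the cyclic order $(a_1, a_2, b_1, b_2)$ up to the reversal and rotation permitted by the definition. This order places $a_1$ opposite $b_1$ and $a_2$ opposite $b_2$, which is exactly the condition that the two channels genuinely cross.

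The routine part is the transport induction and the reading of Table~\ref{tab:f}; the part demanding care is the synchronization, namely confirming that the two reconstructed messages emerge at $b_1$ and $b_2$ with the \emph{same} delay $7$ and correctly interleaved in parity. This is delicate because the crossing automaton $\mathcal{X}$ transmits its two couplets with \emph{different} delays ($3$ and $2$, per Lemma~\ref{lem:coup}), so the four copies of $\mathcal{X}$ together with the splitting and recombining stages must be laid out so that these offsets cancel; the explicit choice of transition functions and geometry in Fig.~\ref{fig:f} is engineered to achieve exactly this cancellation, and the induction/table check is what confirms it. Once both verifications are in hand, $\mathcal{F}$ solves the 4-CAS Problem, and---as established in the discussion preceding the Problem's statement, where it was shown that a solution to the 4-CAS Problem suffices---the CQ is answered affirmatively.
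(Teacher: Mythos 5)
Your proposal is correct and takes essentially the same route as the paper, which simply exhibits $\mathcal{F}$ and relies on Table~\ref{tab:f} (and the geometry of Fig.~\ref{fig:f}) to certify the two delay-$7$ channels and their crossing. Your additional observation---that because every $f_x$ is a parity-dependent projection, the routing of states is input-independent, so a single table with pairwise-distinct tracer symbols suffices to establish the channel property for \emph{all} inputs $\iota$ as Definition~\ref{def:chan} requires---is a worthwhile piece of rigour that the paper leaves implicit.
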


\begin{proof}
$\mathcal{F}$ is a solution to the 4-CAS Problem.
\end{proof}

\begin{remark}\label{rem:schemelink}
So as to clarify the connection between $\mathcal{F}$ and the general scheme of Sect.~\ref{sec:solgen},\ we note that the neighbourhood of each of $f$, $h$, $m$ and $o$ acts as a copy of $\mathcal{X}$ (more precisely, a reflection thereof), that the sets $\left\{a_1, e, l\right\}$ and $\left\{a_2, q, r\right\}$ each act as (a rotation of) a copy of $\mathcal{Y}$, and that the sets $\left\{b_1, i, p\right\}$ and $\left\{b_2, c, d\right\}$ each act as (a rotation of) a copy of $\mathcal{L}$.
\end{remark}

\subsection{Improving the Solution}\label{sec:solopt}

We solve above the 4-CAS Problem (constructively, by exhibiting a 4-CAS $\mathcal{F}$ that crosses two channels), but thus far make no mention of the solution's \emph{efficiency}.

\begin{definition}\label{def:meas}
Let $\mathcal{C} = \left(L, A, S, N, \left\{f_c\right\}_{c \in L \setminus A}\right)$ be a CAS that crosses two channels $\mathcal{C} \colon a_i \stackrel{\delta_i}{\Rightarrow} b_i$ ($i \in \left\{1, 2\right\}$), where $\left|\left\{a_1, a_2, b_1, b_2\right\}\right| = 4$.
We define the following measures of efficiency:
\begin{btlists}
  \item the \emph{maximum delay} $\Delta\left(\mathcal{C}\right) = \max\left\{\delta_1, \delta_2\right\}$,
  \item the \emph{mean delay} $\bar{\delta}\left(\mathcal{C}\right) = \frac{\delta_1 + \delta_2}{2}$, and
  \item the automaton's \emph{size} (in cells) $\sigma\left(\mathcal{C}\right) = \left|L\right|$.
\end{btlists}
Note that $\Delta$ and $\bar{\delta}$ depend not only upon $\mathcal{C}$, but also upon the choice of channels; where this choice is not clear, a subscript of the form `$a_1, a_2 \Rightarrow b_1, b_2$' may be used, e.g.,\ `$\Delta_{a_1, a_2 \Rightarrow b_1, b_2}\left(\mathcal{C}\right)$' as a clarification of `$\Delta\left(\mathcal{C}\right)$'.
\end{definition}

\begin{remark}
We suggest that these measures are in some sense natural ones to consider, but certainly do not claim that they are unique in this respect.
Indeed, even amongst these three, the question of which measures are of most interest is situation-dependent: if, for example, the channels being implemented as CASs are used for urgent communications, then a small delay may be preferable to a small number of cells, whereas if the channels are used in some portable technology, then size and hence number of cells may trump speed and hence delay.
\end{remark}

In the case of our solution $\mathcal{F}$ to the 4-CAS Problem, $\Delta\left(\mathcal{F}\right) = \bar{\delta}\left(\mathcal{F}\right) = 7$ and $\sigma\left(\mathcal{F}\right) = 20$.

Whereas consideration of $\mathcal{F}$ is instructive due to the CAS's similarity with the scheme of Sect.~\ref{sec:solgen}\ (recall Remark~\ref{rem:schemelink}), $\mathcal{F}$ is by no means optimal with respect to the measures of Definition~\ref{def:meas}.
Concretely, we exhibit now a solution to the 4-CAS Problem preferable to $\mathcal{F}$ in each of these respects.

Let $\mathcal{G}$ be the 4-CAS given in Fig.~\ref{fig:g}.
Its sources are $a_1$ and $a_2$; its neighbourhood index tuple is $N_4 = \left(\left(-1, 0\right), \left(0, -1\right), \left(1, 0\right), \left(0, 1\right), \left(0, 0\right)\right)$.

\begin{figure}[htbp]
  \centering
  \includegraphics[height=32mm]{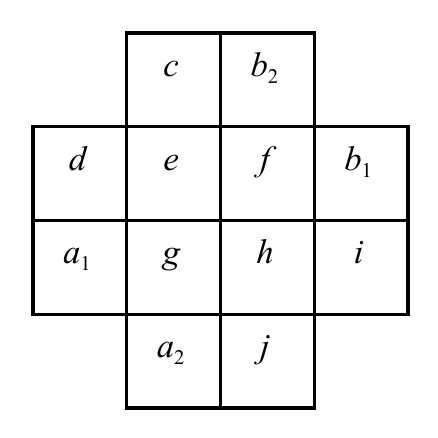}
  \caption{$\mathcal{G} = \left(\left\{a_1, a_2, b_1, b_2, c, d, e, \ldots, j\right\}, \left\{a_1, a_2\right\}, S, N_4, \left\{f_x\right\}_{x \in \left\{b_1, b_2, c, d, e, \ldots, j\right\}}\right)$.}\label{fig:g}
\end{figure}

With the `$x \leftarrow i;j$' and `$X \leftarrow i;j$' notation as in the definition of $\mathcal{F}$ above, the transition functions of $\mathcal{G}$ are as follows.
$\left\{b_1, b_2, e, h\right\} \leftarrow 1;2$.
$\left\{c, d\right\} \leftarrow 2;5$.
$\left\{f, g\right\} \leftarrow 2;1$.
$\left\{i, j\right\} \leftarrow 5;1$.

Supposing that $a_1$ is supplied with the message A, B, C, {\ldots}\ and $a_2$ with 0, 1, 2, {\ldots},\ and supposing \vs-initialization, the behaviour of $\mathcal{G}$ is given in Table~\ref{tab:g}.

\begin{table}[htbp]\footnotesize
  \begin{center}
  \begin{tabular}{r||cccc|cccc|cccc}
    $t$ & $a_1$ & $a_2$ & $c$ & $d$ & $e$ & $f$ & $g$ & $h$ & $i$ & $j$ & $b_1$ & $b_2$ \\
    \hline
    0 & A & 0 & \vs & \vs & \vs & \vs & \vs & \vs & \vs & \vs & \vs & \vs \\
    1 & B & 1 & \vs & \vs & \vs & \vs & A & \vs & \vs & 0 & \vs & \vs \\
    2 & C & 2 & \vs & B & \vs & \vs & 1 & A & \vs & 0 & \vs & \vs \\
    3 & D & 3 & \vs & B & 1 & \vs & C & 0 & A & 2 & \vs & \vs \\
    4 & E & 4 & 1 & D & B & 0 & 3 & C & A & 2 & \vs & \vs \\
    5 & F & 5 & 1 & D & 3 & B & E & 2 & C & 4 & \textbf{A} & \textbf{0} \\
    6 & G & 6 & 3 & F & D & 2 & 5 & E & C & 4 & \textbf{B} & \textbf{1} \\
    7 & H & 7 & 3 & F & 5 & D & G & 4 & E & 6 & \textbf{C} & \textbf{2} \\
    \vdots & \vdots & \vdots & \vdots & \vdots & \vdots & \vdots & \vdots & \vdots & \vdots & \vdots & \vdots & \vdots
  \end{tabular}
  \caption{The behaviour of $\mathcal{G}$. Notably, $\mathcal{G} \colon a_i \stackrel{5}{\Rightarrow} b_i$.}\label{tab:g}
  \end{center}
\end{table}

Note in particular that $\mathcal{G} \colon a_1 \stackrel{5}{\Rightarrow} b_1$ and $\mathcal{G} \colon a_2 \stackrel{5}{\Rightarrow} b_2$, and that, by virtue of the geometric layout of $\mathcal{G}$, these channels are crossed.
Further, consideration of the efficiency measures of Definition~\ref{def:meas} gives that $\Delta\left(\mathcal{G}\right) = \bar{\delta}\left(\mathcal{G}\right) = 5$ and $\sigma\left(\mathcal{G}\right) = 12$; hence, with respect to these measures, $\mathcal{G}$ offers a better solution to the 4-CAS Problem than $\mathcal{F}$.\footnote{In fact, we claim that $\mathcal{G}$ is \emph{optimal} with respect to these three measures (at least when messages' states may have \emph{physical presence} as described in the second bullet point of Sect.~\ref{sec:intrel}). We defer justification to \cite{msc}.}

Intuitively, the reason for the improvement (over $\mathcal{F}$) offered by $\mathcal{G}$ is that (a)~the sets of cells of $\mathcal{G}$ that act as copies of $\mathcal{X}$ (these sets are the respective neighbourhoods of $e$, $f$, $g$ and $h$) have greater pairwise overlaps than is the case with the corresponding sets of cells in $\mathcal{F}$; and (b)~the sets of cells of $\mathcal{G}$ acting as copies of $\mathcal{Y}$ (these sets are $\left\{a_1, d, g\right\}$ and $\left\{a_2, g, j\right\}$) or of $\mathcal{L}$ ($\left\{b_1, f, i\right\}$ and $\left\{b_2, c, f\right\}$) are better shaped---specifically, they are `L'-shaped rather than straight---to pass data to/receive data from the rest of the cells.

\section{Conclusion}\label{sec:con}

\subsection{Summary}

We consider in the present paper the problem of crossing two channels; in particular, we stipulate exactly two spatial dimensions, whence the crossed channels necessarily intersect.
We formalize this situation in Sect.~\ref{sec:intapp}\ using \emph{cellular automata}, modified so as to endow the systems with the ability to accept inputs (i.e.,\ the messages to be carried by the channels).
In Sect.~\ref{sec:intrel},\ we recall previous approaches to this question, but note that the solutions presented here have over these approaches the advantage (amongst others) of being able to cross streams not merely of \emph{information}, but also of \emph{physical objects}.

In Sect.~\ref{sec:solgen},\ we exhibit a scheme whereby channels may be crossed; the key is a simple sub-automaton $\mathcal{X}$ that, whilst unable to cross channels carrying arbitrary messages, is at least able successfully to cross \emph{couplet} messages; then arbitrary messages can be crossed by splitting them into couplets, crossing with $\mathcal{X}$, and recombining the couplets into the original messages.
In Sect.~\ref{sec:solimp},\ we implement this scheme as a 4-CAS $\mathcal{F}$, thus answering the motivating question of the present work: \emph{it is, in two dimensions, indeed possible to cross channels without impairing their capacity}.
In Sect.~\ref{sec:solopt},\ we consider the efficiency of $\mathcal{F}$, introducing measures that capture the time/space costs incurred in using the system to cross channels; we go on to exhibit a system $\mathcal{G}$ that improves upon $\mathcal{F}$ with respect to these efficiency measures.

\subsection{Applications}\label{sec:conapp}

We finish by noting some potential applications of the present paper's theoretical contribution (namely, that channels can be crossed in two dimensions without disrupting their messages).
We consider, then, practical settings in which is inherent a restriction to two dimensions.

\begin{btlists}
  \item A potential application, suggested by Peter Covey-Crump, is in \emph{chip design}.
      Components on a chip are connected by conductive tracks printed onto the chip's layers.
      If two tracks cross on a single layer, then they are necessarily electrically connected; if such connection is not desired (e.g.,\ if one track is to form a connection between components $N$ and $S$, and the other, \emph{independently}, between $E$ and $W$), then one track must sidestep to another layer and bridge over the other track using `vias' (connections between corresponding points on different layers), which incurs expense (not least because of the necessity of the chip's consisting of several layers).
      We suggest that the present paper's scheme---i.e.,\ the approach of splitting each signal into two streams (which are then crossed without loss), though not necessarily implemented via cellular-automatic means---may offer scope for a preferable alternative to this costly, multi-layer bridging (though we concede that it is as yet far from clear how, on a single layer, sub-automaton $\mathcal{X}$ can be implemented efficiently).
  \item Another potential application, similar to the above in that `bridging' into a third dimension is costly, is in \emph{civil engineering}.
      Specifically, \emph{road junctions} playing the roles of $\mathcal{X}$, $\mathcal{Y}$ and $\mathcal{L}$ can be used to implement the scheme of the present paper, so as to cross two roads without the need for (and expense of) a bridge carrying one over the other.
      The form of these three junction types may for example be ($\mathcal{Y}$)~a division of alternate cars (or alternate blocks of $n$ consecutive cars) into two parallel streams (consisting, therefore, alternately of cars/blocks and gaps); ($\mathcal{L}$)~a re-merging of the streams; and ($\mathcal{X}$)~a level-crossing-style junction whereby two `car-gap-car-gap{\ldots}'\ streams are crossed, the cars of each stream synchronizing with and passing through the gaps of the other.
\end{btlists}

We defer the development of these applications and others to future work.

\section*{Acknowledgements}

This paper is based upon (and extended from) unpublished research presented in \cite{msc}.
Accordingly, we thank Richard Brent, the supervisor of that project, for his detailed comments; and Samson Abramsky and Peter Covey-Crump, for kindly taking the time to examine the project and for their insightful discussion.
We thank the three anonymous \emph{NCMA 2012} referees for their helpful suggestions and comments.
We acknowledge the generous financial support of the Leverhulme Trust, which funds the author's current position.

\end{document}